\newcommand{\rpst}{{\bf MSRPP}}
\definecolor{darkorange}{rgb}{1.0,0.5,0}
\definecolor{darkgreen}{rgb}{0,0.5,0}
\newcommand{\edo}{\end{document}} 
\begin{document}

\title{Counterexample for the 2-approximation of finding partitions of rectilinear polygons with minimum stabbing number}

\author{Breno Piva\inst{1} \and Cid C. de Souza\inst{2}}

\institute{Universidade Federal de Sergipe, Departamento de Computa\c{c}\~{a}o, Av. Marechal Rondon, s/n Jardim Rosa Elze, 49100-000, S\~{a}o Crist\'{o}v\~{a}o, Sergipe, Brasil. \email{bpiva@ufs.br} \and Universidade de Campinas, Instituto de Computa\c{c}\~{a}o, Campinas, S\~{a}o Paulo, Brasil. \email{cid@ic.unicamp.br}}

\maketitle

\begin{abstract}
This paper presents a counterexample for the approximation algorithm proposed by Durocher and Mehrabi \cite{DurocherM12} for the general problem of finding a rectangular partition of a rectilinear polygon with minimum stabbing number.
\end{abstract}

\section{Introduction}
\label{sec:intro}

Given a  rectilinear polygon  $P$ and a  rectangular partition  $R$ of
$P$, a segment  is said to be \textbf{rectilinear} relative  to $P$ if
it  is  parallel  to  one  of  $P$'s sides.   Let  $s$  be  a  maximal
rectilinear  line segment  inside  $P$.  The  stabbing  number of  $s$
relative to $R$ is defined as the number of rectangles of $R$ that $s$
intersects.  The stabbing number of $R$ is the largest stabbing number
of  a maximal  rectilinear line  segment  inside $P$.  
The Minimum Stabbing Rectangular Partition Problem (\rpst) consists in
finding  a  rectangular  partition  $R$ of  $P$  having  the  smallest
possible stabbing number.  Figure~\ref{fig:random-20-17-0} illustrates
these definitions.

Variants of the problem arise  from restricting the set of rectangular
partitions that are  considered to be valid. One of  these variants is
called the {\em conforming case}, in  which every edge in the solution
must be maximal, i.e., both of  its endpoints must touch the border of
the polygon.  
For this  problem, in \cite{DurocherM12},  Durocher et al.  propose an
integer  programming model  for the  conforming case  where there  are
exactly two  edges (that can  be in  the solution) having  each reflex
vertex  as endpoint.  Thus,  there are  also  precisely two  variables
associated to each reflex vertex.

\begin{figure}[h!]
\centering \epsfig{figure=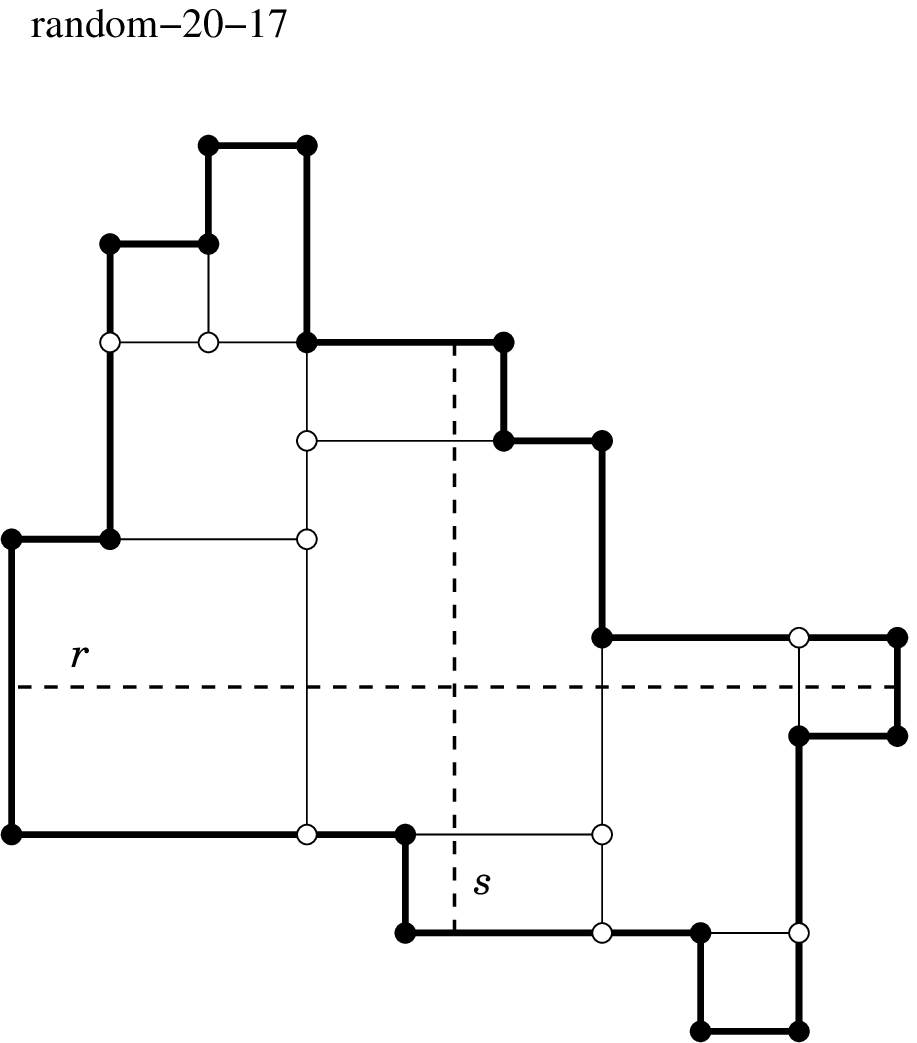,width=0.5\textwidth}
\caption{A  rectilinear  polygon  with   a  rectangular  partition  of
  stabbing number $4$. The  dashed lines represent maximal rectilinear
  line segments inside  the polygon.  Segment $r$  has stabbing number
  $4$ while segment $s$ has stabbing number $3$.}
\label{fig:random-20-17-0}
\end{figure}

In \cite{DurocherM12} a $2$-approximation algorithm is  presented for the conforming case of partitions of rectilinear polygons with minimum
stabbing  number. That approximation algorithm is based in a rounding of the variables. In  the \emph{Conclusion}  section  of the  article, it  is stated that the algorithm could be extended for the general case using
a formulation described informally and the same rounding rules used in
the conforming case.

In  this   paper  we   show  that  the   algorithm  as   described  in
\cite{DurocherM12}  cannot give  a $2$-approximation  for the  general
case of the (\rpst). This is done  by means of a counterexample to the
referred algorithm.

\section{IP Models}
\label{sec:ipmodels}

The \rpst\  can be  modelled via  integer programming  in a  number of
different ways.   In this  section we  present two  such models for the general case of \rpst\ in an
attempt to formalize the description given in \cite{DurocherM12}.  But
first, we need some definitions.

Let  $P$ be  a  rectilinear polygon,  input of  the  \rpst. Define  as
$V_r^P$ the set of \textbf{reflex vertices} of $P$, i.e., those having
internal  angles equal  to $3  \pi /  2$. Let  $V_c^P$ be  the set  of
vertices of $P$  that are not reflex. Denote by  $grid(P)$, the set of
all maximal rectilinear line segments in  the interior of $P$ having a
vertex in $V_r^P$ as  one of its endpoints. Let $V_s^P$  be the set of
points in the intersection of two  segments in $grid(P)$.  We refer to
these points as \textbf{Steiner Vertices}.  The points that are not in
$V_r^P$  or $V_c^P$  and  are  in the  intersection  of  a segment  in
$grid(P)$ and the  border of $P$ compose the set  $V_b^P$.  
Denote by $V^P$ the set resulting from the union of all the point sets
defined before, i.e., $V^P = V_r^P \cup V_c^P \cup V_s^P \cup V_b^P$.

Define $E_h^P$ as the set of line segments in the border of $P$ having
only two points in $V^P$ which are its extremities.
Any fragment of a segment in $grid(P)$ containing exactly two vertices
in $V^P$ is called an \textbf{internal  edge}. The set of all internal
edges is $E_i^P$ and the set of all  edges in $P$ is $E^P = E_h^P \cup
E_i^P$. A subset $E'^P$ of $E^P$  defines a \emph{knee} in a vertex $u
\in V_s^P \cup  V_r^P$ if exactly two  edges in $E'^P$ have  $u$ as an
endpoint and these  edges are orthogonal. A subset $E'^P$  of $E^P$ is
said to define an \emph{island} in a  vertex $u \in V_r^P$ if only one
edge of $E'^P$ have $u$ as an  endpoint. At last, if $ua$ and $ub$ are
two edges in  $E^P$ having a common endpoint $u$,  we denote the angle
between $ua$ and $ub$ by $\theta(ua,ub)$.

Now, we can formalize the model described in \cite{DurocherM12} as follows:

{\small
\begin{align}
(RPST)            && z ~=~ & \min k & \label{eq:rpstobj} \\
\mbox{subject to} 
									&&			 & x_{ua} + x_{ub} \geq  1, & \forall\ u \in V_r^P \wedge ua, ub \in E_i^P, \label{eq:minmaxreflex} \\
                  &&       & x_{ua} + x_{ub} - x_{uc} \geq 0, & \forall\ u \in V_s^P, \forall\ ua, ub, uc \in E_i^P \nonumber \\
									&&   &&  \mbox{with}~\theta(ua,ub) = \pi/2 \label{eq:minmaxsteiner}, \\
									&&			 & \sum_{\substack{uv \in E_i^P \\ uv \bigcap s \neq \emptyset}} x_{uv}  \leq k-1, & \forall\ {s\in L}, \label{eq:rpststab} \\
									&&       & x_{uv} \in \mathbb{B} & \forall\ uv \in E_i^P, \label{eq:rpstintegral}\\
                  &&       & k \in \mathbb{Z}. \label{eq:stabintegral}
\end{align}
}

In the  model above,  we have  one binary  variable $x_{uv}$  for each
internal edge  $uv$ in  $P$ which  is set to  $1$ if  and only  if the
corresponding   edge  is   in  the   rectangular  partition   of  $P$.
Constraints~\eqref{eq:minmaxreflex} ensure that  the solution does not
contain       a       knee       in       a       reflex       vertex.
Inequalities~\eqref{eq:minmaxsteiner}  impose that  the solution  does
not   form   a   knee   or    an   island   in   a   Steiner   vertex.
Inequalities~\eqref{eq:rpststab}   relate  the   $x$  variables   with
variable $k$, which represents the stabbing number of the solution. As
a  consequence,   the  objective  function~\eqref{eq:rpstobj}   is  to
minimize      $k$.       Finally,     \eqref{eq:rpstintegral}      and
\eqref{eq:stabintegral}   are   integrality   restrictions   for   the
variables.   Figure~\ref{fig:random_20_17_1} shows  an instance  of the
\rpst\ (called  {\tt random-20-17})  with 62  internal edges  and their
corresponding variables.

\begin{figure}[h!]
	\centering \epsfig{figure=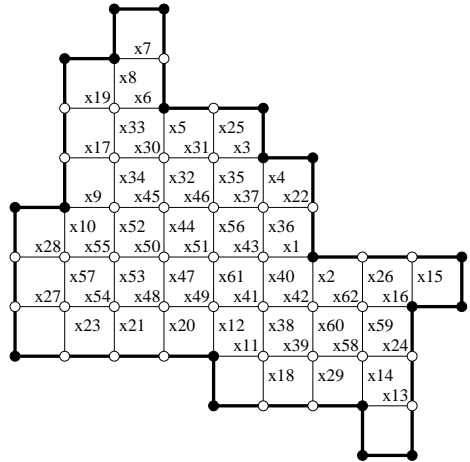,width=0.5\textwidth}
	\caption{Instance {\tt random-20-17} with 62 internal edges and the corresponding variables.}
\label{fig:random_20_17_1}
\end{figure}

As stated before,  the $(RPST)$ model above is not  the only model for
the problem and next we show another way of modelling it. However, to guarantee the correctness of the model we must first prove a property of optimal solutions for the \rpst. The following proposition is a generalization of \emph{Observation 1} in \cite{DurocherM12}.

\begin{proposition}
\label{prop:obs1}
Any rectilinear polygon $P$ has an optimal rectangular partition $R$ in which every maximal segment of $R$ has at least one reflex vertex of $P$ as an endpoint.
\end{proposition}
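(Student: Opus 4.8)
The plan is to start from an arbitrary optimal rectangular partition $R$ of $P$ and transform it, without increasing the stabbing number, into one satisfying the stated property. Call a maximal segment of $R$ \emph{bad} if neither of its endpoints is a reflex vertex of $P$. The goal is to eliminate all bad segments one at a time.

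First I would analyze the structure of a bad segment $s$. Since $s$ is maximal in $R$, each of its two endpoints lies either on the boundary of $P$ or in the relative interior of another segment of $R$ (forming a $T$-junction). Because $R$ is a valid rectangular partition, the cells on both sides of $s$ are unions of rectangles; the key observation is that if $s$ has no reflex vertex of $P$ as an endpoint, then $s$ can be translated parallel to itself — sweeping a thin slab of $R$ — until it either hits a reflex vertex, merges with another collinear segment, or leaves the polygon entirely, at which point $s$ is deleted and the rectangles it separated are merged. I would argue that during such a sweep the combinatorial structure changes only at finitely many positions, and that we can always choose a direction of translation that does not create a knee or an island and does not increase the stabbing number of any maximal rectilinear line inside $P$: a line crossing $s$ either still crosses the translated copy (net change zero) or the translation is in a direction that, for that line, only merges cells.

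Next I would make the termination argument precise, since naive translation could loop. The cleanest way is a potential-function or extremal argument: among all optimal partitions, pick one minimizing (say) the total number of segments, or the total length of segments not incident to a reflex vertex; then show a bad segment in such a minimal partition leads to a contradiction by exhibiting a strictly smaller optimal partition via the sweep above. Handling the case analysis for the endpoint types ($T$-junction vs. boundary, horizontal vs. vertical) and checking that the sweep never increases any segment's stabbing number is where the real work lies.

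The main obstacle I anticipate is the bookkeeping in the sweep: verifying that translating $s$ (and the incident portions of the partition) never raises the stabbing number of \emph{any} maximal rectilinear segment of $P$, including segments orthogonal to $s$ whose cell-count could in principle change at the moving endpoints. I expect this to require a careful local argument showing that at a $T$-junction the "stub" of the crossed segment is shortened or lengthened in a way that preserves crossing counts, together with the observation (inherited from \emph{Observation 1} of \cite{DurocherM12}, which this generalizes) that one can always push toward a reflex vertex because the polygon has finitely many of them and the sweep must halt at one before degenerating.
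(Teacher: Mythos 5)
Your sweep/drag strategy is essentially the paper's own proof: it takes an optimal partition, considers a maximal segment $e$ with no reflex-vertex endpoint, and drags $e$ parallel to itself (shrinking or stretching the perpendicular segments incident to it, merging with collinear segments met along the way) until it reaches a reflex vertex or the border of $P$, where it merges or vanishes. The obstacle you flag as the real work is settled in the paper by a simple case analysis on the two minimal rectangles flanking $e$: drag toward the side whose rectangle contains at least as many perpendicular segments incident to $e$ (or toward the unique non-empty side, or just delete $e$ if both are empty), so that only stabbing lines parallel to $e$ are affected and each such line trades its crossings at most one-for-one, hence the stabbing number cannot increase.
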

\begin{proof}
Let $R$ be a rectangular partition of a rectilinear polygon $P$. Let $e$ be a maximal segment in $R$ having $a$ and $b$ as its endpoints. Suppose neither $a$ nor $b$ are reflex vertices. Since $e$ is maximal and $R$ is a rectangular partition, both endpoints of $e$ must lie in segments perpendicular to $e$.

Now, since $R$ is a rectangular partition, $e$ define two minimal rectangles (each one possibly containing other rectangles) having $e$ as one of its sides, let us denote them by $r_1$ and $r_2$. There are three cases to consider.

The first case consists of $r_1$ and $r_2$ been empty rectangles, i.e., neither $r_1$ nor $r_2$ contain other rectangles. Therefore, the removal of $e$ unite these rectangles, composing a single rectangle. Therefore, $R\setminus e$ is still a rectangular partition. It is clear that removing a segment cannot increase the stabbing number of the solution. Thus, if $R$ is an optimal solution, so is $R\setminus e$.

The second case to consider is when only one of $r_1$ or $r_2$ contains other rectangles. Suppose without loss of generality that $r_1$ is the one containing other rectangles. Now, we can drag $e$ towards $r_1$, shrinking any segment with an endpoint in $e$, until $e$ meets a reflex vertex or the border of $P$. In the latter case, $e$ is merged to the border of $P$. It is easy to see that the result of this dragging operation is also a rectangular partition besides, the only stabbing segments affected by this operation are the ones parallel to $e$ and their stabbing number cannot increase. Therefore, as $R$ is optimal, so must be the new solution.

At last, we must consider the case where both $r_1$ and $r_2$ contain other rectangles. Suppose without loss of generality that the number of segments in $r_1$ having an endpoint in $e$ (thus, perpendicular to it) is greater or equal than the number of segments with these characteristics in $r_2$. Then, again, we can drag $e$ towards $r_1$, shrinking any segment with an endpoint in $e$, until $e$ meets either a segment parallel to $e$ or a reflex vertex or the border of $P$.
If a parallel segment is met, $e$ is merged to it and the process is repeated until a reflex vertex or the border of $P$ is met.
In case the border of $P$ is met, $e$ ceases to exist together with the segments in the space between $e$ and the border.
Once again, the dragging operation results in a rectangular partition of $P$ and the only stabbing segments affected by this operation are parallel to $e$. But, as the number of segments in $r_1$ is greater or equal than the number of segments in $r_2$, one can see that the stabbing number of the new rectangular partition cannot be greater than that of $R$.

Ergo, there is always an optimal rectangular partition where every maximal segment has at least one reflex vertex of $P$ as an endpoint.\qed
\end{proof}

In the next  model, given the same definitions as  before, we consider
the set $E_e^P$  of rectilinear segments $uv$ where $u  \in V_r^P$ and
$v \in  V^P$. Notice  that a  segment of $E_e^P$  can be  comprised of
several consecutive segments  of $E_i^P$.  Hence, we  call $E_e^P$ the
\textbf{extended  edge} set.   In  the formulation  below,  we have  a
variable $x_{uv}$ for each edge in $E_e^P$ and from Proposition~\ref{prop:obs1} it is easy to notice that this set of variables is sufficient to provide optimal rectangular partitions.

{\small
\begin{align}
(RPST2)            && z ~=~ & \min k & \label{eq:rpstobj2} \\
\mbox{subject to}  && \nonumber \\
									 &&			  & \sum_{ua \in E_e^P} x_{ua} \geq  1, && \forall\ u \in V_r^P \label{eq:reflex2} \\
                   &&       & x_{ab} + x_{uv} \leq 1, && \forall\ ab, uv : ab \cap uv \neq \emptyset\ \wedge \nonumber \\
									 &&       & && \wedge ab \cap uv \neq a, b, u\ or\ v \label{eq:planar2} \\
									 &&			
& \sum_{
\substack{\theta(uv, ab) = \pi/2~\wedge\\  \wedge~b \in uv~\wedge~b \neq u~\wedge~b \neq v}
}
x_{uv} - x_{ab} \geq 0, && \forall\ a \in V_r^P, b \in V_s^P \label{eq:steiner2} \\ 
  								 &&			 & \sum_{uv \in E_e^P : uv \bigcap s \neq \emptyset} x_{uv} \leq k-1,  && \forall\ {s\in L} \label{eq:rpststab2} \\
									&&       & x_{uv} \in \mathbb{B} && \forall\ uv \in E_e^P. \label{eq:rpstintegral2}\\
                  &&       & k \in \mathbb{Z} &&& & && \label{eq:stabintegral2}
\end{align}
}

In  this model,  inequalities  \eqref{eq:reflex2}  guarantee that  the
solution  does not  contain a  knee in  a reflex  vertex.  Constraints
\eqref{eq:planar2} enforce  planarity (two  segments of  the partition
can       only        intersect       at        their       extremes).
Constraints~\eqref{eq:steiner2}  prevent the  existence  of knees  and
islands in  a Steiner  vertex.  Finally, \eqref{eq:rpststab2}  are the
stabbing      constraints      and     \eqref{eq:rpstintegral2}   and
\eqref{eq:stabintegral2}       are      integrality       constraints.
Figure~\ref{fig:random_20_17_2} shows  instance {\tt  random-20-17} with
42 internal edges and the corresponding variables.

\begin{figure}[htp]
	\centering \epsfig{figure=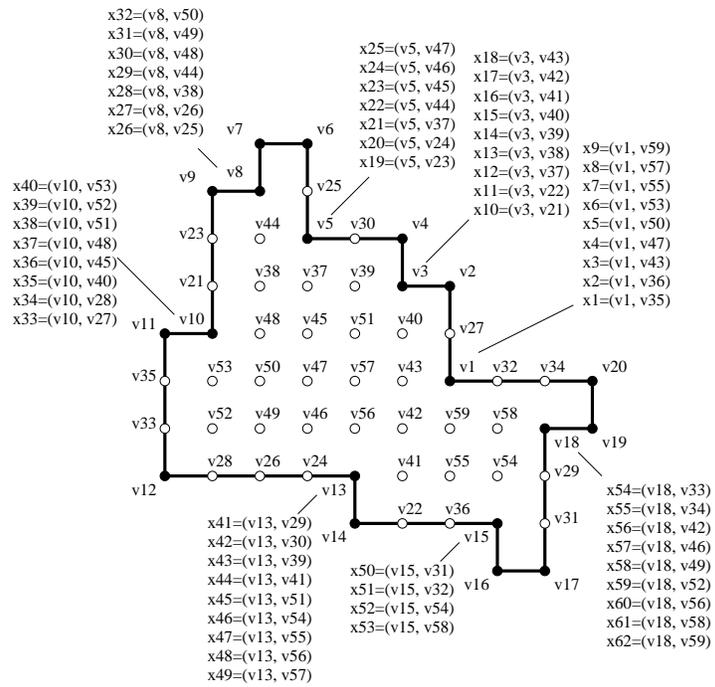,width=0.8\textwidth}
	\caption{Instance {\tt random-20-17} with its extended edges and corresponding variables.}
\label{fig:random_20_17_2}
\end{figure}

\section{The Counterexample}
\label{sec:thecounterexample}

Before discussing  the counterexample,  we first present  the rounding
scheme proposed  in \cite{DurocherM12} for the  conforming case. Once the optimum of the linear relaxation is computed, the rules
for rounding  variables in  the conforming case  are really  simple: a
variable corresponding to a horizontal segment is rounded down to zero
if its value  is smaller than or  equal to $0.5$ and is  rounded up to
one if its value is greater than $0.5$.  A variable corresponding to a
vertical segment is rounded down to  zero if its value is smaller than
$0.5$ and is rounded  up to one if its value is  greater than or equal
to $0.5$.

In  the \emph{Conclusion}  section  of \cite{DurocherM12},  a  model for  the
general  (non-conforming)  case  is described  informally.   From  the
discussion,  apparently  such  model  is equivalent  to  the  $(RPST)$
formulation  given in  Section~\ref{sec:ipmodels}.   According to  the
authors, the same rounding rules used in the conforming case provide a
2-approximation for the general case.

The rounding  rules do  not mention  what should  be done  for Steiner
vertices, and  no guarantee  is given that  applying them  directly in
these situations will avoid the formation of a knee or an island.
In  fact, the  instance displayed  in Figure~\ref{fig:counterexample-20-17-1}
shows that this cannot always be done without sacrificing feasibility.
In this figure,  the optimal values of the  variables corresponding to
edges    incident     to    Steiner    vertex    {\tt     v39}    (see
Figure~\ref{fig:random_20_17_2})  after solving  the linear  relaxation
associated  to instance  {\tt  random-20-17} are  given.   As only  the
variable corresponding  to one vertical  edge incident to  that vertex
has value equal  to $0.5$ and the other three  are smaller than $0.5$,
rounding  according  to  that  rule  would  result  in  an  island  at
$v37$. Therefore, the  set of edges obtaining after  rounding does not
form a rectangular partition.

\begin{figure}[htp]
	\centering \epsfig{figure=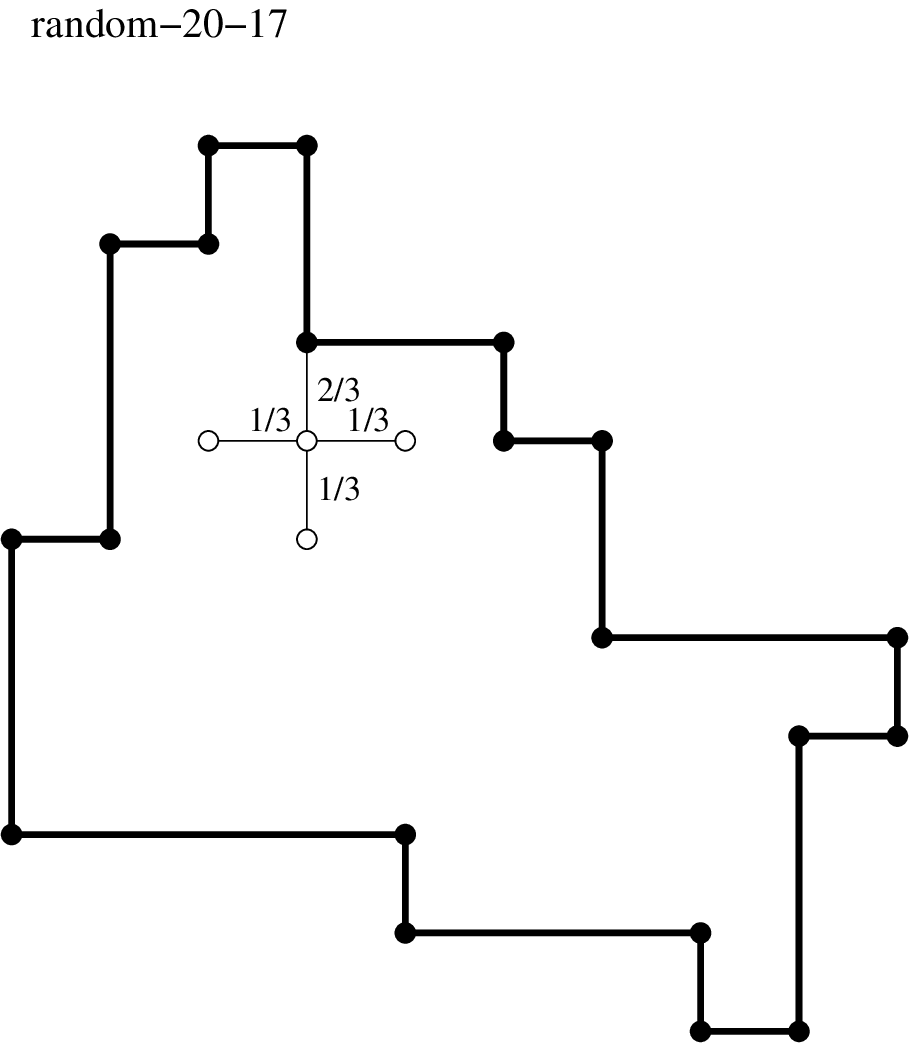,width=0.5\textwidth}
	\caption{Values of variables corresponding to edges incident to a Steiner vertex after solving linear relaxation.}
\label{fig:counterexample-20-17-1}
\end{figure}

It is  however possible that  we misinterpreted the model  the authors
were thinking of (although there is evidence in contrary) and the idea
is actually  to define variables  corresponding to all edges  having a
reflex vertex  as one of its  endpoints. If so, the  formulation would
look like $(RPST2)$ model in the previous section. In this alternative
formulation, rounding the variables using that rule does not cause the
same  problem as  before since  every variable  correspond to  an edge
having a reflex vertex as endpoint.

Contrary to what  happens in the conforming case,  however, the reflex
vertices  here have  more than  two incident  edges. Therefore,  it is
possible that the  solution of the linear relaxation  result in values
smaller than  $0.5$ for all  the variables corresponding to  the edges
incident  to a  certain  reflex  vertex. Thus,  the  rounding of  such
solution would result in a partition having a knee in a reflex vertex.

The situation  described above occurs  in practice with  instance {\tt
  random-20-17}, as shown in Figure~\ref{fig:counterexample-20-17-2}.
Consider the  edges incident  to vertex {\tt  v5}. All  the associated
variables incident to this vertex have value smaller than $0.5$.
As  consequence,  they will  be  rounded  to  zero, resulting  in  the
formation  of a  knee at  {\tt v5}  and, therefore,  in an  infeasible
solution.

\begin{figure}[htp]
	\centering \epsfig{figure=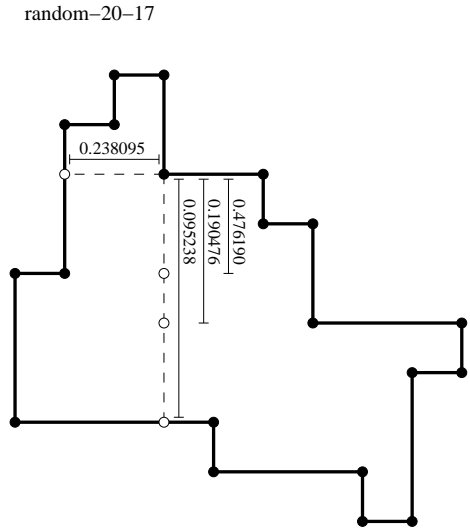,width=0.5\textwidth}
	\caption{Values of variables corresponding to edges incident to a reflex vertex after solving linear relaxation.}
\label{fig:counterexample-20-17-2}
\end{figure}

\section{Conclusion}
\label{sec:concl}

From  the counterexample presented in Section~\ref{sec:thecounterexample},  we conclude  that  it remains  open whether a 2-approximation for the \rpst\ in the general case exists. It is, however, noteworthy that many other contributions are presented in \cite{DurocherM12} and none of them are diminished by this counterexample.

\bibliographystyle{plain}
\bibliography{biblio}

\section*{Appendix}
\label{sec:appendix}

\appendix

\begin{verbatim}
File name: random-20-17.rect
Model: RPST
Vertex number: 59
Edge number: 62

Reading Problem stab
Problem Statistics
         231 (      0 spare) rows
          63 (      0 spare) structural columns
         752 (      0 spare) non-zero elements
Global Statistics
          63 entities        0 sets        0 set members
Minimizing MILP stab
Original problem has:
       231 rows           63 cols          752 elements        63 globals 

   Its         Obj Value      S   Ninf  Nneg        Sum Inf  Time
     0           .000000      D      1     0      24.000000     0
    74          2.416667      D      0     0        .000000     0
Optimal solution found
 *** Search unfinished ***    Time: 0
Number of integer feasible solutions found is 0
Best bound is     2.416667

Solution:

X1  =  0.583333  X2  =  0.416667  X3  =  0.095238  X4  =  0.916667  X5  =  0.666667
X6  =  0.333333  X7  =  0.428571  X8  =  0.571429  X9  =  0.904762  X10 =  0.845238
X11 =  1.000000  X12 =  0.571429  X13 =  0.000000  X14 =  1.000000  X15 =  0.000000
X16 =  1.000000  X17 = -0.000000  X18 =  0.500000  X19 =  0.238095  X20 = -0.000000
X21 = -0.000000  X22 =  0.333333  X23 =  0.428571  X24 =  1.000000  X25 =  0.238095
X26 = -0.000000  X27 =  0.285714  X28 =  0.130952  X29 =  0.500000  X30 =  0.333333
X31 =  0.333333  X32 =  0.333333  X33 =  0.333333  X34 =  0.333333  X35 =  0.095238
X36 =  0.583333  X37 =  0.333333  X38 =  0.500000  X39 =  0.500000  X40 =  0.297619
X41 =  0.285714  X42 =  0.583333  X43 =  0.285714  X44 = -0.000000  X45 =  0.333333
X46 =  0.333333  X47 = -0.000000  X48 =  0.285714  X49 =  0.285714  X50 =  0.285714
X51 =  0.285714  X52 =  0.571429  X53 =  0.285714  X54 =  0.285714  X55 =  0.571429
X56 =  0.000000  X57 =  0.714286  X58 =  1.000000  X59 = -0.000000  X60 =  0.500000
X61 =  0.285714  X62 =  1.000000  X63 =  3.000000  

***********************************************************************************

File name: random-20-17.rect
Model: RPST2
Vertex number: 59
Edge number: 62
Reading Problem stab
Problem Statistics
         336 (      0 spare) rows
          63 (      0 spare) structural columns
         996 (      0 spare) non-zero elements
Global Statistics
          63 entities        0 sets        0 set members
Minimizing MILP stab
Original problem has:
       336 rows           63 cols          996 elements        63 globals 
Crash basis containing 13 structural columns created

   Its         Obj Value      S   Ninf  Nneg        Sum Inf  Time
     0           .000000      D      1     0      24.000000     0
    66          2.413793      D      0     0        .000000     0
Optimal solution found
 *** Search unfinished ***    Time: 0
Number of integer feasible solutions found is 0
Best bound is     2.413793

Solution:

X1  =  0.190476  X2  =  0.380952  X3  =  0.142857  X4  = -0.000000  X5  = -0.000000
X6  = -0.000000  X7  = -0.000000  X8  = -0.000000  X9  =  0.285714  X10 = -0.000000
X11 =  0.142857  X12 =  0.142857  X13 = -0.000000  X14 =  0.000000  X15 =  0.523810
X16 = -0.000000  X17 = -0.000000  X18 =  0.190476  X19 =  0.238095  X20 =  0.095238
X21 =  0.000000  X22 = -0.000000  X23 =  0.476190  X24 = -0.000000  X25 =  0.190476
X26 =  0.380952  X27 =  0.190476  X28 =  0.000000  X29 =  0.238095  X30 = -0.000000
X31 = -0.000000  X32 =  0.190476  X33 =  0.523810  X34 =  0.809524  X35 = -0.000000
X36 =  0.285714  X37 =  0.380952  X38 =  0.000000  X39 =  0.000000  X40 =  0.190476
X41 =  0.619048  X42 = -0.000000  X43 = -0.000000  X44 =  0.142857  X45 = -0.000000
X46 = -0.000000  X47 = -0.000000  X48 =  0.380952  X49 =  0.000000  X50 =  0.380952
X51 = -0.000000  X52 =  0.619048  X53 = -0.000000  X54 = -0.000000  X55 =  0.095238
X56 =  0.142857  X57 = -0.000000  X58 = -0.000000  X59 =  0.380952  X60 = -0.000000
X61 =  0.000000  X62 =  0.380952  X63 =  3.000000  
\end{verbatim}

\end{document}